\newtheorem{definition}{Definition}
\newtheorem{theorem}{Theorem}
\newenvironment{keywords}{%
    \par\medskip\noindent
    \small
    \textbf{Keywords:}\enspace\ignorespaces
}{%
    \par\medskip
}
\title{{\bf Generating gaussian pseudorandom noise with binary sequences}}
\author[1]{Francisco-Javier Soto \thanks{Email: franciscojavier.soto@urjc.es}}
\author[2]{Ana I. Gómez\thanks{Email: ana.gomez.perez@urjc.es}}
\author[1]{Domingo Gómez-Pérez\thanks{Email: gomezd@unican.es}}
\affil[1]{Universidad Rey Juan Carlos, Madrid, Spain}
\affil[2]{Universidad de Cantabria, Santander, Spain}
\date{}
\begin{document}
\maketitle
\begin{abstract}
 
 Gaussian random number generators attract a widespread interest due to their applications in several fields.
  Important requirements include easy implementation, tail accuracy,  and, finally, a flat spectrum. In this work, we study the applicability of uniform pseudorandom binary generators in combination with the Central Limit Theorem to propose an easy to implement, efficient and flexible algorithm that leverages the properties of the pseudorandom binary generator used as an input, specially with respect to the correlation measure of higher order, to guarantee the quality of the generated samples. Our main result provides a relationship between the pseudorandomness of the input and the statistical moments of the output.
We propose a design  based on the combination of pseudonoise sequences commonly used on wireless communications with known hardware implementation, which can generate sequences with guaranteed  statistical distribution properties sufficient for many real life applications and simple machinery.
Initial computer simulations on this construction show promising results in the quality of the output and the computational resources in terms of required memory and complexity. 
\end{abstract}
\begin{keywords}
Pseudorandom number generator, Gold code, $m$-sequence, Central Limit Theorem, Gaussian distribution.
\end{keywords}
\section{Introduction}
\label{sec:intro}
The performance of many industrial applications depends on the simulation of random events, for example biological simulations, communication channels measurement or electronic instrument calibration. Although one of the most commonly demanded statistical distribution is the Gaussian distribution, the principal body of work is focused in the uniform distribution case. There are well-known results about construction of pseudorandom numbers with uniform distribution \cite{kuipers2012uniform}, that have been proposed as the basis to generate any other distribution. In a general setting, some methods rely on the rejection method and variations of it, such as acceptance-rejection, composition rejection, rejection with squeeze, etc. Others are based on the inversion method, or use special properties of the normal distribution like the Box-Muller algorithm that is widely used in practice due to its improved performance~\cite{hormann2004automatic}.
However, there are downsides which limit the applicability of this algorithm: it requires computation of elementary functions including the logarithm and square root as well as trigonometric functions which are costly to implement with logic gates. Moreover, the tail accuracy is directly dependent on the implementation \cite{malik2016gaussian}.

The previous limitations show just a few  of the found technical difficulties to implement a Gaussian Random Number generator (GRNG) over hardware. 
This is of maximum importance for many different real world applications where memory, computation time and throughput are constrained.  In this case hardware-related parameters (v.g., number of logical gates, buffers, circuit layout,...) have to be minimized and it is recommended to reuse circuitry and functions, when possible. Different hardware-based techniques
have been compared based on the amount of required  hardware resources, the statistical precision and the tail accuracy \cite{alimohammad2008compact}.


A straightforward implementation of sums of truly random samples tends to the Gaussian population due to the well known Central Limit Theorem (CLT), although the resulting distribution will approximate poorly in the tail to the Gaussian distribution. The ease of implementation of this approach has fostered the proposal of CLT-based GRNGs with improved statistical properties and tail accuracy such as CLT correction, CLT inversion and multi hat methods \cite{malik2016gaussian}.  Still, there is room for improvement on the performance and speed by limiting the statistical accuracy properties, aiming at achieving a minimalist solution that maintains an acceptable flexibility and ease of implementation.

In this work, we study the theoretical framework to apply the CLT to a sum of pseudorandom binary sequences with good correlation properties. We note that this approach provides a worse approximation comparing  with of uniform numbers for the same number of terms (see Irwin-Hall distribution of the sums of numbers in the interval $(0, 1)$ \cite{marengo2017geometric}). Still, it is widely seen as an acceptable trade off because the former allows to work at bit level. We note that also there are already efficient implementation for most pseudorandom sequences because of the widespread use in wireless communications.

This work focus on ``simple'' GRNG, which requires to control moments up to third and fourth order, i.e., skewness and kurtosis,  that have been studied for $m$-sequences~\cite{wainberg1970subsequences}. The family of $m$-sequences is a good candidate due to the easy implementation by Linear Feedback Shift Registers (LFSRs). However $m$-sequences and several derived families show peaks on the higher order correlations~\cite {chen-2022-correl-measur, waifi2023}, that will affect the quality of the output of any CLT based GRNG as we will proof in Section ~\ref{sec:GeneratingG}.
Increasing the number of LFSRs and at the same time, using their states as part of the input can reduce previous limitations. Recent works \cite{kang2010fpga, condo2015pseudo, cotrina2020gaussian, cotrina2021gaussian} improve previous GRNGs architectures by using this idea and validate the results through computer experiments. This heuristic approach seems to also have nice properties, such that reduced latency and allowing parallelism. This study is the first step to fill this gap between theory and practice.

The outline of this work is the following: Section \ref{sec:GeneratingG} describes the main results regarding  the moments  of Gaussian pseudorandom sequences constructed from the combination of binary sequences, 
 Section \ref{sec: gold_codes} shows that Gold codes under certain conditions guarantee the absence of full third and fourth peaks. Section~\ref{sec: computations}
provides computational experiments for $m$-sequences and Gold codes and
Section~\ref{sec: conclusions} concludes the article with a
discussion on the parameters, a comparison with other LFSR-based GRNGs and some open problems.

\section{ Gaussian Random Number Generation from Binary Sequences}
\label{sec:GeneratingG}




This section shows the dependence between the moments of Gaussian pseudorandom sequences and the correlation measure of the binary sequences used to generate them. In the following Theorem \ref{th: bounds_moments} we give this dependence in terms of bounds. We denote the binary sequences of period $N$ as $s(i) \in \{-1, 1\}$ or simply $s$ when possible. Also, for the reader's convenience, we recall the definition of \emph{combined correlation measure of
order~$k$} for the periodic case~\cite{gyarmati2004family}.
\begin{definition}
\label{def:combined_correlation}
Given a binary sequence~$s$ of period~$N$,
$\theta_k(s, N)$ is the combined correlation measure of order~$k$, defined as
\begin{equation}
\label{eq: corr}
\theta_k(s, N) = \max_{L, D, T}\left|\sum_{i=1}^T
s(L\cdot i+d_1) \cdots s(L\cdot i+d_{k}) \right |,
\end{equation}
where $D=(d_1,\ldots, d_k)$ with
$0\le d_1 < \cdots < d_k < N$, the sum on $i$ run such  all values $L\cdot i+d_1,\ldots, L\cdot i+d_k\in\{1,\ldots, N\}$ and $T \le N$.
\end{definition}
The combined correlation measure of order~$k$ is a powerful measure for asserting the pseudorandomness of a binary sequence, which calculates the correlation over arithmetic subsequences. 

This is the discrete version of  the product moments that have been characterized for continuous Gaussian random variables  \cite{song2015explicit, song2017proof}. 
For a continuous Gaussian random variable with zero mean, the product moments of odd order $k$ must be exactly zero \cite[Corollary 2]{song2015explicit}.  This also holds for every order $k$ when the random variables are different~\cite[Remark 5]{song2015explicit}.  In particular, we have proven that if the correlation measure of the binary sequence $s$ is well-bounded and $M$ is much smaller than   $T$ the generated sequences satisfy those facts.


\begin{theorem}
\label{th: bounds_moments}
    Let $s(i)$ be a binary sequences of period $N$, $M$ a  positive integer with $M \ll N$, and $k$ a non-negative integer. Then the following holds,
\begin{multline}
\label{eq: rows}
\frac{1}{T}\sum_{i = 1}^{T} \left( \sum_{n = 1}^{M}s(i + n) \right)^{k}
              \leq (M(k-1))^{k/2} + 
              \frac{M^{k}\max_{1\leq r \leq k}\theta_{r} (s(i), N)}{T},
\end{multline}
where if $k$ is odd, the first term in the right of the inequality disappears.

\begin{proof}
We follow the argument of Davenport and Erd\"os \cite[Lemma 3]{davenport1952distribution}. First, we prove Equation \eqref{eq: rows}. Expanding the 
\begin{multline}
    \label{large_sum}
\sum_{i = 1}^{T} \left( \sum_{n = 1}^{M}s(i + n) \right)^{k} = \\
\sum_{d_1 = 1}^{M} \cdot \cdot \cdot \sum_{d_{k} = 1}^{ M} \sum_{i = 1}^{T} s(i + d_1) \cdot \cdot \cdot s(i + d_{k})\le \\
\sum_{d_1 = 1}^{M} \cdot \cdot \cdot \sum_{d_{k} = 1}^{ M} \left |\sum_{i = 1}^{T} s(i + d_1) \cdots s(i + d_{k})\right |
\end{multline}
we can bound the inner sum depending on the integers $d_1,\ldots,d_{k}$. If each value of $d_1,\ldots, d_k$ is repeated an even number of times, we bound the inner sum by $T$. The number of choices for $d_1,\ldots, d_k$ so that to happen is less than $(M (k-1))^{k/2}.$

When not all $d_1,\ldots, d_k$ appear an even number of times, 
we can remove repetitions, giving a subset  different $d'_1,...,d'_{r}$ which substituting in the inner sum 
\begin{equation}
\label{cote_correlation}
\left |\sum_{i = 1}^{T} s(i + d'_1) \cdot \cdot \cdot s(i + d'_{r}) \right | \leq \theta_{r}(s(i), N).
\end{equation}
The number of such cases is bounded by $M^k$, therefore the contribution in Equation \eqref{large_sum} is less than $$M^{k}\max_{1\leq r\leq k}\theta_{r}(s(i), N).$$
This finishes the proof.
\end{proof}
\end{theorem}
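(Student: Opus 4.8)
The plan is to expand the $k$-th power of the inner sum $\sum_{n=1}^M s(i+n)$ using the multinomial-style expansion, turning the left-hand side into a sum over all $k$-tuples $(d_1,\dots,d_k) \in \{1,\dots,M\}^k$ of the correlation-type quantities $\sum_{i=1}^T s(i+d_1)\cdots s(i+d_k)$. After taking absolute values term by term, the task reduces to estimating each such inner sum and counting how many tuples fall into each of two regimes. This is the strategy of Davenport and Erd\H{o}s \cite{davenport1952distribution}, adapted here to the combined correlation measure of Definition~\ref{def:combined_correlation}.

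Next I would split the $M^k$ tuples into two classes. In the first class, every value occurring among $d_1,\dots,d_k$ occurs an even number of times; then pairing up equal indices makes $s(i+d_1)\cdots s(i+d_k) = 1$ identically (since $s(i)^2 = 1$), so the inner sum is at most $T$ in absolute value. I would bound the number of such ``fully paired'' tuples: at most $k/2$ distinct values are used, each chosen from $M$ possibilities, and there are fewer than $(k-1)(k-3)\cdots$ ways to match positions, giving the crude but sufficient bound $(M(k-1))^{k/2}$; when $k$ is odd no tuple can be fully paired, so this class is empty and the first term drops. In the second class, at least one value occurs an odd number of times; after deleting pairs of equal indices we are left with a nonempty tuple $d_1' < \cdots < d_r'$ of distinct values with $1 \le r \le k$, and the inner sum equals $\sum_{i=1}^T s(i+d_1')\cdots s(i+d_r')$, which by \eqref{eq: corr} (taking $L=1$ and the appropriate shift) is bounded by $\theta_r(s,N) \le \max_{1\le r\le k}\theta_r(s,N)$. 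The number of tuples in this class is trivially at most $M^k$, so their total contribution is at most $M^k \max_{1\le r \le k}\theta_r(s,N)$.

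Combining the two bounds gives $\sum_{i=1}^T (\sum_n s(i+n))^k \le (M(k-1))^{k/2}\, T + M^k \max_r \theta_r(s,N)$, and dividing by $T$ yields \eqref{eq: rows}. The main technical point to be careful about is the reduction in the second class: one must check that after removing repeated pairs the surviving indices can legitimately be read as the arguments of a combined correlation measure of order $r$ — i.e. that the constraints $0 \le d_1' < \cdots < d_r' < N$ and the range condition on the summation index $i$ in Definition~\ref{def:combined_correlation} are met after a harmless reindexing, and that $r \le N$ (which holds since $r \le k$ and implicitly $k \le N$ in the regime $M \ll N$). The counting bound $(M(k-1))^{k/2}$ for the first class is deliberately loose, so no sharpness is needed there; the only genuine care is in matching the combinatorial structure of the expansion to the definition of $\theta_r$.
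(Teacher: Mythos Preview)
Your proposal is correct and follows essentially the same approach as the paper's own proof: expand the $k$-th power into a sum over $k$-tuples $(d_1,\dots,d_k)$, split into the ``all multiplicities even'' class (bounded by $T$ per tuple, with at most $(M(k-1))^{k/2}$ tuples, empty when $k$ is odd) and the remaining class (reduced to distinct $d_1',\dots,d_r'$ and bounded via $\theta_r$, with at most $M^k$ tuples), then divide by $T$. Your extra care about matching the reduced tuple to the range constraints in Definition~\ref{def:combined_correlation} is a detail the paper leaves implicit, but otherwise the arguments coincide.
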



We define the following sequence $S$ which values distributes following a Gaussian distribution by the CLT:

\begin{equation}
\label{eq: gaussian_source}
 S(i) = \left( M^{-1/2}\left(\sum_{n = 1}^{M}s\left(n + i M\right) \right)\right).
\end{equation}

This definition aims to minimize the dependence between consecutive terms  of the generated Gaussian sequence by 
taking sums of blocks of $M$ terms without reusing them. We emphasize that for every $1 \leq M \leq N$ similar properties that for Equation \eqref{eq: rows} must hold,  where
there are overlaps between each consecutive sum of $M$ terms.




\begin{theorem}
\label{th: th2}
Let $S$ be the sequence defined in the Equation \eqref{eq: gaussian_source}, $M$ a non-negative integer with $M \ll T$. Let us take $k$ positive integers $0\le d_1<\ldots< d_k< T-M$. Under these conditions we have the following for every $M$ such that $0 \leq M \leq T$,
\begin{multline}
\left |\frac{1}{T}\sum_{i =1}^T S(i + d_1) \cdot \cdot \cdot S(i + d_k) \right | \leq 
(k-1)^{k/2} +
 \frac{M^{k/2}\max_{1\leq r \leq k}\theta_{r} (s(i), N)}{T},
\end{multline}
where the first term in the right part of the inequality dissapears for $k$ odd.  
\end{theorem}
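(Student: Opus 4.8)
The plan is to reduce Theorem~\ref{th: th2} to Theorem~\ref{th: bounds_moments} by unfolding the definition~\eqref{eq: gaussian_source} of $S$ and carefully tracking how the block structure interacts with the correlation measure. First I would write out the product $S(i+d_1)\cdots S(i+d_k)$ explicitly: each factor $S(i+d_j) = M^{-1/2}\sum_{n=1}^{M} s(n + (i+d_j)M)$ contributes a factor $M^{-1/2}$ and an inner sum of $M$ shifted terms of $s$, so the whole product carries a global constant $M^{-k/2}$ and expands into $M^k$ terms of the form $s(n_1 + (i+d_1)M)\cdots s(n_k + (i+d_k)M)$ with each $n_j$ ranging over $\{1,\ldots,M\}$. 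Summing over $i$ from $1$ to $T$ and applying the triangle inequality gives $M^{-k/2}$ times a sum over the $M^k$ choices of $(n_1,\ldots,n_k)$ of $\bigl|\sum_{i=1}^T s(n_1+(i+d_1)M)\cdots s(n_k+(i+d_k)M)\bigr|$.

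The key observation is that each inner sum $\sum_{i=1}^T s(n_1+(i+d_1)M)\cdots s(n_k+(i+d_k)M)$ is exactly an arithmetic-progression correlation sum of the type appearing in Definition~\ref{def:combined_correlation}: setting the common difference to be $L = M$ and the offsets to be $e_j = n_j + d_j M$, the sum runs over $s(L i + e_1)\cdots s(L i + e_k)$. So I would mimic the dichotomy in the proof of Theorem~\ref{th: bounds_moments}: if all the effective shifts $e_j$ (equivalently all pairs $(n_j,d_j)$) occur with even multiplicity, each factor pairs off and the inner sum is bounded trivially by $T$ in absolute value — and a counting argument shows the number of such tuples is at most $(M(k-1))^{k/2}$ after accounting for the choices; multiplying by the prefactor $M^{-k/2}$ and dividing by $T$ yields the $(k-1)^{k/2}$ term. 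When not all shifts have even multiplicity, one removes repeated factors (using $s(\cdot)^2 = 1$) to obtain a genuine correlation sum in some $r \le k$ distinct shifts along the progression of common difference $M$, which is bounded by $\theta_r(s,N)$; there are at most $M^k$ such tuples, and after the $M^{-k/2}$ prefactor this contributes $M^{k/2}\max_{1\le r\le k}\theta_r(s,N)/T$. Adding the two contributions gives the claimed bound, with the even-multiplicity term absent when $k$ is odd since no tuple of odd length can have all multiplicities even.

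The main obstacle I anticipate is bookkeeping around the distinctness hypothesis $d_1 < \cdots < d_k$ and the admissibility constraints in Definition~\ref{def:combined_correlation} — specifically checking that the shifts $e_j = n_j + d_j M$ stay within $\{1,\ldots,N\}$ for the relevant range of $i$, that the common difference $L = M$ is allowed, and that collapsing repeated factors really does produce a configuration covered by $\theta_r$ (in particular, two distinct pairs $(n_j,d_j)\ne(n_{j'},d_{j'})$ could in principle give the same residue and hence the same sequence value, which needs to be handled as a "repetition" just as in the proof of Theorem~\ref{th: bounds_moments}). I would also double-check the exponent arithmetic: the prefactor $M^{-k/2}$ against $(M(k-1))^{k/2}$ choices gives $(k-1)^{k/2}$, and against $M^k$ choices gives $M^{k/2}$, matching the statement. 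Once the correspondence between block sums of $S$ and arithmetic-progression correlation sums of $s$ is set up cleanly, the rest is essentially a re-run of the Davenport–Erd\H{o}s counting argument already used above.
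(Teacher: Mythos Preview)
Your proposal is correct and follows essentially the same route as the paper: expand each $S(i+d_j)$ via~\eqref{eq: gaussian_source}, pull out the factor $M^{-k/2}$, interchange sums to obtain inner sums $\sum_{i=1}^T\prod_j s(Mi+Md_j+n_j)$ along the arithmetic progression with step $L=M$, and then apply exactly the even-multiplicity versus genuine-correlation dichotomy from Theorem~\ref{th: bounds_moments}. The paper's own proof is terser but identical in structure, explicitly invoking ``as in Theorem~\ref{th: bounds_moments}'' for the counting step; your added remarks about verifying the admissibility constraints in Definition~\ref{def:combined_correlation} and about when distinct pairs $(n_j,d_j)$ can collide are appropriate sanity checks that the paper leaves implicit.
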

\begin{proof}
The proof is similar to Theorem \ref{th: bounds_moments}. First,
\begin{multline*}
\left |\sum_{i =1}^T S(i + d_1) \cdot \cdot \cdot
S(i + d_k) \right |= \\
\frac{1}{M^{k/2}}\left|\sum_{n_1 = 1}^M\cdot \cdot \cdot \sum_{n_k = 1}^M \sum_{i=1}^T \left(\prod_{j=1}^ks(Mi + Md_j+ n_j) \right) \right|\le \\
\frac{1}{M^{k/2}}\sum_{n_1 = 1}^M\cdot \cdot \cdot \sum_{n_k = 1}^M \left|\sum_{i=1}^T \left(\prod_{j=1}^ks( Mi+ M d_j+ n_j) \right) \right|.
\end{multline*}
Now, we expand the inner sum,
\begin{multline*}
    \left|\sum_{i=1}^T \left(\prod_{j=1}^ks( Md_j + iM+ n_j) \right) \right|
= \\ \left |\sum_{i=1}^T s(Mi + M d_1 + n_1)\cdots s(Mi + M d_k + n_k) \right |.
\end{multline*}

 The number of possible ways to take the integers $n_1,\ldots,n_k$ in the set $\{1,...,M\}$ is $M^k$. Therefore if $k$ is odd we can bound this internal sum by $M^k\max_{1\leq r \leq k}\theta_{r} (s(i), N).$

 If $k$ is even, it can happen that $M d_1+ n_1,\ldots, M d_k+ n_k$ appears an even number of times each value, therefore making the sum equal to $T$.
 We can calculate the number of times that this happens as in Theorem~\ref{th: bounds_moments}. This finishes the proof.
\end{proof}

\subsection{Correlation properties of Gold codes}
\label{sec: gold_codes}
There are several families of binary sequences with good bounds for the combine correlation measure for many values of $k$, we will focus on sequences generated by two LFSRs such as the Gold code as a proof of concept due to its simple implementation with only two LFSRs  and  an XOR gate \cite{sarwate1980crosscorrelation}.

 For a finite field of characteristic 2, denoted by $\mathbb{F}_q$ with $q = 2^n$, the \emph{trace function} is defined as the following map,
$\text{Tr}: \mathbb{F}_q \rightarrow \mathbb{F}_2$,
$$
\text{Tr}(x) = \sum_{j = 1}^n x^{2^j}.
$$
\begin{definition}
\label{def: gold}
    Let $\alpha \in \mathbb{F}_q$ be a primitive element and let $f(x) = x + x^{2^r + 1}$ where $(r , n) = 1$ be a polynomial in the $\mathbb{F}_q$. A \emph{Gold code} is the following binary sequence $s$ with period $q-1$,
\begin{equation}
\label{eq:trace_general_sequence}
s(i) = \psi(f(\alpha^i)),
\end{equation}
where $\psi(x) = (-1)^{\text{Tr}(x)}$, i.e., it is an additive character defined by the trace function.
\end{definition}

 We remark that to use the LFSRs architecture, we must convert the binary {0,1} sequence to {-1,1} as in Equation \eqref{eq:trace_general_sequence}.

Now we compile previous results from \cite {chen-2022-correl-measur, waifi2023,follath2008construction} in Theorem~\ref{th: theorem_Gold}.
These results on the correlation measure of Gold codes holds for any $n$.
\begin{theorem}
\label{th: theorem_Gold}
Let $\mathbb{F}_q$ be a finite field where $q - 1 = 2^n -1$ is a Mersenne prime and $s(i)$ be a Gold code. Then, for every $k$ such that $1\leq k \leq 4$
$$
\theta_k(s, q-1) \leq 9 n 2^{2r+1 + n/2}.
$$
Also, there is a full peak at $k = 5$. 
\end{theorem}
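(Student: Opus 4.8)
The plan is to prove the two separate claims of Theorem~\ref{th: theorem_Gold} by rather different routes. For the upper bound on $\theta_k(s,q-1)$ when $1\le k\le 4$, the key observation is that the Gold code $s(i)=\psi(f(\alpha^i))$ is an additive character evaluated along the geometric progression $\alpha^i$, so every term appearing in Definition~\ref{def:combined_correlation} can be rewritten as a character sum over $\mathbb{F}_q$ of the form $\psi\bigl(\sum_{j} f(\alpha^{d_j} x^L)\bigr)$ with $x$ ranging over a coset of a subgroup of $\mathbb{F}_q^{*}$. When $q-1$ is a Mersenne prime the only subgroup is the whole group, so $L$ being coprime to $q-1$ (or the sum over the full cycle) lets us replace the arithmetic-progression constraint by a sum over all of $\mathbb{F}_q^{*}$, and then over all of $\mathbb{F}_q$ up to a harmless $+1$. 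First I would reduce to bounding $\bigl|\sum_{x\in\mathbb{F}_q}\psi(g(x))\bigr|$ where $g(x)=\sum_{j=1}^{k}\bigl(\beta_j x + \beta_j^{2^r+1} x^{2^r+1}\bigr)$ for the distinct shifts $\beta_j=\alpha^{d_j}$ corresponding to the reduced (de-duplicated) index set of size $r\le k\le 4$. This is a Weil-type sum for a (generalized) polynomial of $2^r+1$-degree bounded in terms of $n$ and $r$, and the Weil bound gives $\bigl|\sum_x\psi(g(x))\bigr|\le (\deg g - 1)\sqrt{q}$, which after tracking the constants and the factors contributed by collapsing the progression and by the trace-linearization yields the stated $9n\,2^{2r+1+n/2}$. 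The main obstacle here is the bookkeeping: one must verify that $g$ is not identically an additive polynomial composed with something that makes $\psi(g)$ trivial (i.e.\ that the sum does not degenerate), which is exactly where the hypotheses $(r,n)=1$, $0\le d_1<\dots<d_k<q-1$, and the primitivity of $\alpha$ enter; this non-degeneracy check for $k\le 4$ is the heart of the cited results \cite{chen-2022-correl-measur,waifi2023,follath2008construction}, so I would invoke those lemmas rather than redo the case analysis.

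For the second claim — the existence of a full peak at $k=5$ — the approach is explicit construction rather than estimation. I would exhibit shifts $d_1,\dots,d_5$ and parameters $L,T$ for which the sum in \eqref{eq: corr} attains its maximal possible value $T$ (a "full peak" meaning $|\sum_i s(Li+d_1)\cdots s(Li+d_5)|$ equals the number of terms, forcing every summand to be $+1$). Concretely, I would look for a linear relation among five of the functions $x\mapsto f(\alpha^{d_j}x)$ modulo the kernel of $\psi$: since $f(x)=x+x^{2^r+1}$, the map $x\mapsto f(\beta x)$ has linear part $\beta x$ and a "nonlinear" part $\beta^{2^r+1}x^{2^r+1}$, and one can choose $\beta_1,\dots,\beta_5$ so that both $\sum_j \beta_j$ and $\sum_j \beta_j^{2^r+1}$ vanish in $\mathbb{F}_q$ — a pair of equations that, over $\mathbb{F}_{2^n}$ with five free multiplicative parameters, is generically solvable. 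When that happens $\sum_{j=1}^5 f(\beta_j x)\equiv 0$ identically, so $\prod_{j=1}^5 \psi(f(\beta_j x))=\psi(0)=1$ for every $x$, and the corresponding correlation sum equals its full length $T$. The remaining detail is to confirm that the five required shifts are genuinely distinct and lie in the allowed range $0\le d_1<\dots<d_5<q-1$, i.e.\ that the five $\beta_j$ are distinct powers of the primitive element $\alpha$; this follows once the solution to the two-equation system is chosen with distinct nonzero coordinates, which is possible precisely because we have more unknowns than equations. I expect the construction step — producing the explicit quintuple and checking distinctness — to be the main obstacle, since it is where the specific algebraic structure of $2^r+1$-type Gold functions must be used, and I would again defer to \cite{chen-2022-correl-measur,waifi2023} for the precise witness.

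Overall the proof is a synthesis: (i) recognize $s$ as a character sum and use the Mersenne-prime hypothesis to kill the arithmetic-progression subgroup obstruction; (ii) apply Weil's bound to the resulting polynomial of controlled degree and carefully collect constants to reach $9n\,2^{2r+1+n/2}$, citing the known non-degeneracy analysis for $k\le 4$; (iii) for $k=5$, solve the two homogeneous equations $\sum\beta_j=\sum\beta_j^{2^r+1}=0$ to force the degenerate (full-peak) case. Since Theorem~\ref{th: theorem_Gold} is explicitly stated as a compilation of \cite{chen-2022-correl-measur,waifi2023,follath2008construction}, the write-up would mostly consist of stating which piece of each reference supplies (i)–(iii) and verifying that the hypotheses ($q-1$ Mersenne prime, $(r,n)=1$) match, rather than reproving the Weil-sum estimates from scratch.
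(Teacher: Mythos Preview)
The paper offers no proof of Theorem~\ref{th: theorem_Gold}: it simply announces the result as a compilation of \cite{chen-2022-correl-measur,waifi2023,follath2008construction} and moves on. You correctly anticipate this in your final paragraph, so your proposal already matches the paper's treatment and then goes considerably further by sketching the underlying character-sum argument and the degeneracy construction for $k=5$.

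One small gap in your sketch is worth flagging for completeness: Definition~\ref{def:combined_correlation} involves \emph{incomplete} sums (the truncation parameter $T\le N$, not just the full period), so the Weil bound over all of $\mathbb{F}_q$ is not by itself sufficient --- one needs a completing step of Erd\H{o}s--Tur\'an or P\'olya--Vinogradov type, and this is the natural source of the extra factor $n\approx\log_2 q$ in the bound $9n\,2^{2r+1+n/2}$. You allude to ``factors contributed by collapsing the progression'' but do not name this mechanism explicitly; since the paper itself defers entirely to the references, this omission does not put you behind it.
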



\section{Computational Experiments}
\label{sec: computations}
In this section we compare different GRNGs using the CLT and the Tausworthe model.

Let $f(x) = \sum _{i = 0}^n b_i x ^i$ be a polynomial with coefficients in $\mathbb{F} _2$ such that $b_0 = b_n = 1$ be a primitive polynomial of the finite field $\mathbb {F}_q$  \textit{i.e.}, the minimal polynomial of a primitive element $\alpha$ of $\mathbb{F}_q$. We recall the concept of maximum length LFSR for the reader's convenience. 

\begin{definition}
\label{def: lfsr}
Using the preceding notation, a maximum length LFSR of $n$ registers, consists of an initial state $\vec{e_0} \in \mathbb{F}_2^n$, a transition function, $T: \mathbb{F}_2^n \rightarrow \mathbb{F}_2^n$, such that
$$
T(e_1,...,e_n) = \left(e_2, ..., e_n, \sum_{i = 1} ^n e_i b_{i - 1} \right)
$$
where $b_i's$ are the coefficients of $f(x)$, which is called the characteristic polynomial of the LFSR, and an output function defined by
$$
\text{out}\left(T^j(\vec{e_0})\right) = \text{out}(e_{1+ j}, ..., e_{n+ j}) =  e_{1+ j} \hspace{0.1in} \text{for every} \hspace{0.1in}  j \geq 0,
$$
with 
$$
T^0(\vec{e_0}) = \vec{e_0} = (e_1,...,e_n).
$$
\end{definition}

The Tausworthe model is a frequently used model to construct pseudorandom numbers with uniform distribution. Each state $\vec{e_i}$ defines a real number in the interval $(0, 1)$ in binary notation by the following application:
\begin{eqnarray*}
\varphi: &\mathbb{F}_2^B\mapsto &[0,1)\\
 & (e_1,\ldots, e_B)\mapsto& \sum_{i=1}^B e_i 2^{-i}. 
\end{eqnarray*}

 Different states can be also combined or partially taken by the application depending on the bit depth $B$, which is the number of registers used to define the uniform pseudorandom numbers.




 We consider the following binary sequences for the experiments proposed from practice. First, we consider an $m$-sequence whose characteristic polynomial is $x^{89} + x^{38} + 1$. Second, we consider a Gold code with $r = 1$ in the Definition \ref{def: gold} whose characteristic polynomials are $f_1(x) = x^{89} +x^{38} +1$ and $f_2(x) =  x^{89} +x^{72} + x^{55} + x^{38} + 1$.

 The parameters for Equation \eqref{eq: gaussian_source} are  $M = 256$ and the Tausworthe model has $32$ bit depth, i.e., taking eight sums of consecutive 32-bit numbers. We have measure the first four moments with a sample size of $10^5$ for both models. The results are summarized in  Table \ref{tab:moments} for $T=10^5$.
 



These tests have been consider in wireless communications, where  signals are interfered  by  delayed version by different shifts. This is studied through the product moments and the polyspectrum \cite{tugnait1994detection, green2003utility}. The triple product moments normalized by the sample size $T$ of each  considered configuration is shown in Figure \ref{fig: triple_correlations} in a window $100 \times 100$ .



\begin{table}[htbp]
\centering
\caption{Moments for the both models and using different binary sequences with $M = 256$.}
\label{tab:moments}
\begin{tabular}{lcccccc}
\toprule
\textbf{Order} & \multicolumn{2}{c}{\textbf{m-sequence}} &  \textbf{Gold code}& \multicolumn{2}{c}{\textbf{m-sequence}} &  \textbf{Gold code} \\
~$k$& $S(i)$  & & $S(i)$ & Tausworthe & & Tausworthe
\\ \midrule
1 & 0.0037  & & -$0.0012$& 0.0003 & &  0.0018\\
2 & 1.0043   & & $1.0011$& 1.0033  & &  0.9992\\
3 & 0.3609  & &  0.0049& 0.0031   & & 0.0022 \\
4 & 3.2182   &&  $3.0061$&  2.8849  &&   2.8421 \\\bottomrule
\end{tabular}
\end{table}


\begin{figure*}[htbp]
\centering
    \includegraphics[scale=0.3]{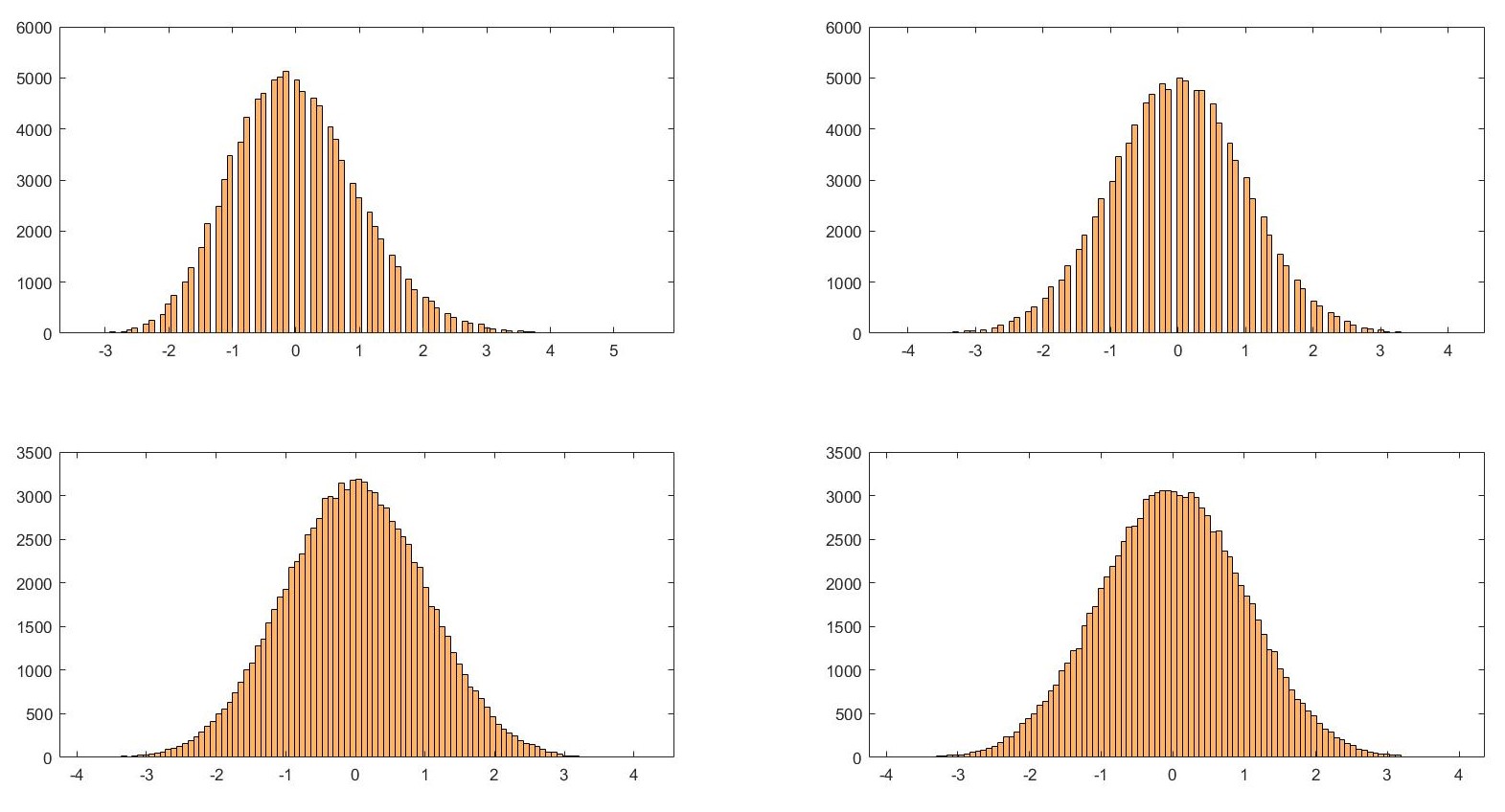}
  \caption{Comparison of histograms with 100 bins generated by GRNGs using the $m$-sequence (left) and the Gold code (right). The histograms for binary sequence model are shown in the top row, while the Tauworthe model is shown in the bottom row.}
  \label{fig: histograms}
\end{figure*}
\begin{figure*}[htbp]
\centering
    \includegraphics[scale=0.3]{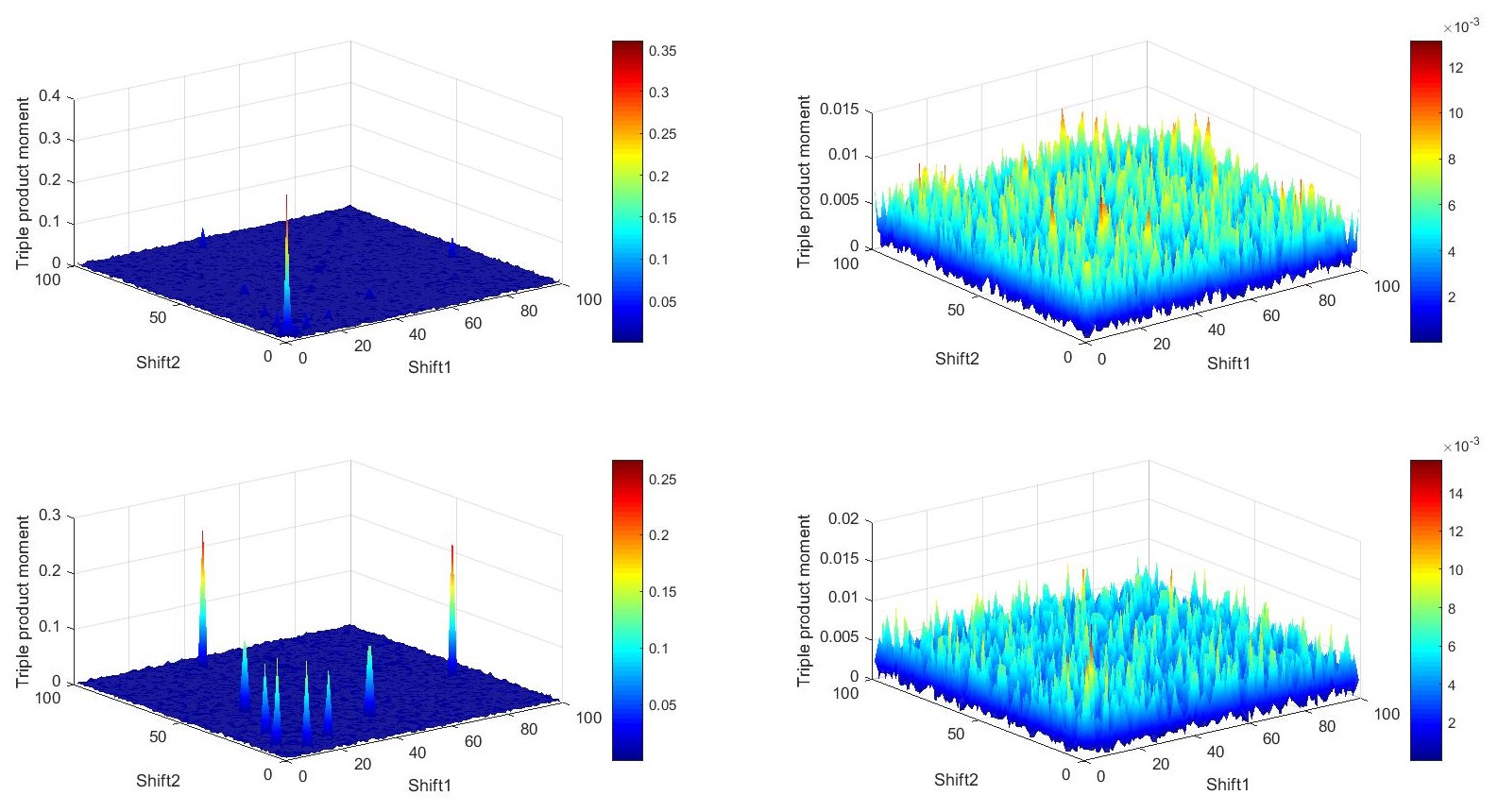}
  \caption{Comparison of the triple product moments (absolute value) generated by GRNGs using the $m$-sequence (left) and the Gold code (right). The calculations for the binary sequence model are shown in the top row, while the Tauworthe model is shown in the bottom row.}
  \label{fig: triple_correlations}
\end{figure*}


Results in Table \ref{tab:moments} shows that the Gold code outperforms the $m$-sequence used as binary sequence. The $m$-sequences show deviations from the expected value in the third and fourth moments. In the case of the GRNGs based on the Tausworthe model there is not a notable difference between using the Gold code and the $m$-sequence.  The fourth moment in the Tausworthe model is far from the expected value $3$ because of the smaller number of sums, which affects the behavior of the tails of the distribution.

 The bit depth and therefore the continuous approximation to the Gaussian distribution is greater for the Tausworthe model as seen in Figure \ref{fig: histograms}. This agrees with previous results of similar GRNG constructions in the literature using $m$-sequences~\cite{kang2010fpga, condo2015pseudo}. Only in the binary sequence method using an $m$-sequence shows a clearly asymmetry due to the presence of peaks on the correlation. 
 

 Figure \ref{fig: triple_correlations} implies patterns in the bispectrum depending on type of binary sequence, independently of the model. The Gold code behaves as expected, close to a Gaussian pseudonoise \cite[Corollary 2]{song2015explicit} while the $m$-sequence show peaks, where  the highest appears in the case of the 
 binary sequence model. 
 
\ Furthermore, we note that the bound of the Theorem \ref{th: th2} normalized by the period $2^{89} - 1$ can be interpreted as the average value of the results obtained for samples of the size $T$ used. For the binary sequence model using the Gold code  it gives us a value close to $10^{-6}$ for $k = 3$ by the Theorem \ref{th: theorem_Gold}  while for the $m$-sequence we get a value much larger than zero \cite[Equation (7d)]{lindholm1968analysis} which partly explains the observed behaviors.
 


\section{Conclusions and Future Work}
\label{sec: conclusions}

We have provided bounds for the higher-order product moments of Gaussian sources constructed from binary sequences.  These bounds depends on the combined correlation measure of order $k$ of the chosen binary sequences. Although these bounds can be improved for special sequences, they are sufficient for real applications and provides the missing link between good binary sequences and Gaussian Random Number Generator. 

In this work we consider Gold codes with a Mersenne prime period, that provides a simple implementation and do not show full-peaks in the third and fourth correlation measures.  We remark that is interesting to characterize other lengths such that the Gold codes presents good properties. However, the length being a Mersenne Prime is not a big restriction for practical purposes. With respect of computational resources, a Gold Code guarantee better statistical properties by doubling the size of the state and the number of XOR gates by four with respect to a $m$-sequence.



Computational results  show that Gold codes offer better statistical and pseudorandom properties regardless of the used model that will depend on the practical application. In the case of the GRNG using an $m$-sequence with the Tausworthe model, we notice that both the moments and the histogram do not show significant deficiencies, but Figure \ref{fig: triple_correlations} shows well-localized peaks, contradicting what is expected for a Gaussian random variable. This negative phenomenon in the case of the $m$-sequence agrees with the existing literature where the distribution of moments is known under the name of distribution of weights of subsequences of $m$-sequences \cite{lindholm1968analysis, jordan1973distribution}. We leave it as an open problem to study how the peaks of triple product moments affect other desirable properties for a GRNG. A first step is to analyze the influence on Gaussian multivariate properties such as orthogonal invariance, i.e., spherical symmetry that may be of special interest in Monte-Carlo and Quasi Monte-Carlo algorithms or sampling in global optimization. 
Also, we propose to study more in detail the Tausworthe model. The standard method to evaluate the Tausworthe model via character sums \cite[Theorem 3.12]{niederreiter1992random} provides no information due to the presence of full peaks in the correlation of order $k=5$ for both sequences.

Finally, it is well known that the peaks of the $m$-sequences depend on the characteristic polynomial that generates them \cite{lindholm1968analysis, jordan1973distribution}, the search of characteristic polynomials that will trade off memory requirements for a improved statistical performance.

\section*{Acknowledgment}
The authors want to thank Andrew Tirkel for pointing the problem and useful discussions. 

Domingo Gómez-Pérez, Ana I. Gómez and Francisco-Javier Soto are partially supported by Research Project
``PROTOCOLOS SEGUROS EN REDES DESCENTRALIZADAS.(AYUDA FINANCIADA CONTRATO PROGRAMA GOB CANTABRIA - UC)''. In addition, Francisco-Javier Soto acknowledges support from the ``PREDOCT2022-006'' of Univerdidad Rey Juan Carlos.



\bibliographystyle{IEEEtran}
\bibliography{bibliografia}

\begin{thebibliography}{10}
\providecommand{\url}[1]{#1}
\csname url@samestyle\endcsname
\providecommand{\newblock}{\relax}
\providecommand{\bibinfo}[2]{#2}
\providecommand{\BIBentrySTDinterwordspacing}{\spaceskip=0pt\relax}
\providecommand{\BIBentryALTinterwordstretchfactor}{4}
\providecommand{\BIBentryALTinterwordspacing}{\spaceskip=\fontdimen2\font plus
\BIBentryALTinterwordstretchfactor\fontdimen3\font minus \fontdimen4\font\relax}
\providecommand{\BIBforeignlanguage}[2]{{%
\expandafter\ifx\csname l@#1\endcsname\relax
\typeout{** WARNING: IEEEtran.bst: No hyphenation pattern has been}%
\typeout{** loaded for the language `#1'. Using the pattern for}%
\typeout{** the default language instead.}%
\else
\language=\csname l@#1\endcsname
\fi
#2}}
\providecommand{\BIBdecl}{\relax}
\BIBdecl

\bibitem{kuipers2012uniform}
L.~Kuipers and H.~Niederreiter, \emph{Uniform distribution of sequences}.\hskip 1em plus 0.5em minus 0.4em\relax Courier Corporation, 2012.

\bibitem{hormann2004automatic}
W.~H{\"o}rmann, J.~Leydold, and G.~Derflinger, \emph{Automatic nonuniform random variate generation}.\hskip 1em plus 0.5em minus 0.4em\relax Springer, 2004.

\bibitem{malik2016gaussian}
J.~S. Malik and A.~Hemani, ``Gaussian random number generation: A survey on hardware architectures,'' \emph{ACM Computing Surveys (CSUR)}, vol.~49, no.~3, pp. 1--37, 2016.

\bibitem{alimohammad2008compact}
A.~Alimohammad, S.~F. Fard, B.~F. Cockburn, and C.~Schlegel, ``A compact and accurate gaussian variate generator,'' \emph{IEEE Transactions on Very Large Scale Integration (VLSI) Systems}, vol.~16, no.~5, pp. 517--527, 2008.

\bibitem{marengo2017geometric}
J.~E. Marengo, D.~L. Farnsworth, and L.~Stefanic, ``A geometric derivation of the irwin-hall distribution,'' \emph{International Journal of Mathematics and Mathematical Sciences}, vol. 2017, 2017.

\bibitem{wainberg1970subsequences}
S.~Wainberg and J.~Wolf, ``Subsequences of pseudorandom sequences,'' \emph{IEEE Transactions on Communication Technology}, vol.~18, no.~5, pp. 606--612, 1970.

\bibitem{chen-2022-correl-measur}
Z.~Chen, A.~I. G{\'o}mez, D.~G{\'o}mez-P{\'e}rez, and A.~Tirkel, ``Correlation measure, linear complexity and maximum order complexity for families of binary sequences,'' \emph{Finite Fields and Their Applications}, vol.~78, no. nil, p. 101977, 2022.

\bibitem{waifi2023}
A.~I. G{\'o}mez, D.~Gomez-Perez, and A.~Tirkel, ``Correlation measure of binary sequence families with trace representation,'' in \emph{Arithmetic of Finite Fields}, S.~Mesnager and Z.~Zhou, Eds.\hskip 1em plus 0.5em minus 0.4em\relax Cham: Springer International Publishing, 2023, pp. 313--319.

\bibitem{kang2010fpga}
M.~Kang, ``Fpga implementation of gaussian-distributed pseudo-random number generator,'' in \emph{6th International Conference on Digital Content, Multimedia Technology and its Applications}.\hskip 1em plus 0.5em minus 0.4em\relax IEEE, 2010, pp. 11--13.

\bibitem{condo2015pseudo}
C.~Condo and W.~Gross, ``Pseudo-random gaussian distribution through optimised lfsr permutations,'' \emph{Electronics Letters}, vol.~51, no.~25, pp. 2098--2100, 2015.

\bibitem{cotrina2020gaussian}
G.~Cotrina, A.~Peinado, and A.~Ortiz, ``Gaussian pseudorandom number generator based on cyclic rotations of linear feedback shift registers,'' \emph{Sensors}, vol.~20, no.~7, p. 2103, 2020.

\bibitem{cotrina2021gaussian}
------, ``Gaussian pseudorandom number generator using linear feedback shift registers in extended fields,'' \emph{Mathematics}, vol.~9, no.~5, p. 556, 2021.

\bibitem{gyarmati2004family}
K.~Gyarmati, ``On a family of pseudorandom binary sequences,'' \emph{Periodica Mathematica Hungarica}, vol.~49, pp. 45--63, 2004.

\bibitem{song2015explicit}
I.~Song and S.~Lee, ``Explicit formulae for product moments of multivariate gaussian random variables,'' \emph{Statistics \& Probability Letters}, vol. 100, pp. 27--34, 2015.

\bibitem{song2017proof}
I.~Song, ``A proof of the explicit formula for product moments of multivariate gaussian random variables,'' \emph{arXiv preprint arXiv:1705.00163}, 2017.

\bibitem{davenport1952distribution}
H.~Davenport and P.~Erdos, ``The distribution of quadratic and higher residues,'' \emph{Publ. Math. Debrecen}, vol.~2, no. 3-4, pp. 252--265, 1952.

\bibitem{sarwate1980crosscorrelation}
D.~V. Sarwate and M.~B. Pursley, ``Crosscorrelation properties of pseudorandom and related sequences,'' \emph{Proceedings of the IEEE}, vol.~68, no.~5, pp. 593--619, 1980.

\bibitem{follath2008construction}
J.~Foll{\'a}th, ``Construction of pseudorandom binary sequences using additive characters over gf (2 k),'' \emph{Periodica Mathematica Hungarica}, vol.~57, no.~1, pp. 73--81, 2008.

\bibitem{tugnait1994detection}
J.~K. Tugnait, ``Detection of non-gaussian signals using integrated polyspectrum,'' \emph{IEEE transactions on signal processing}, vol.~42, no.~11, pp. 3137--3149, 1994.

\bibitem{green2003utility}
D.~R. Green, ``The utility of higher-order statistics in gaussian noise suppression,'' Ph.D. dissertation, Naval Postgraduate School, 2003.

\bibitem{lindholm1968analysis}
J.~Lindholm, ``An analysis of the pseudo-randomness properties of subsequences of long m-sequences,'' \emph{IEEE Transactions on Information Theory}, vol.~14, no.~4, pp. 569--576, 1968.

\bibitem{jordan1973distribution}
H.~F. Jordan and D.~C. Wood, ``On the distribution of sums of successive bits of shift-register sequences,'' \emph{IEEE Transactions on Computers}, vol. 100, no.~4, pp. 400--408, 1973.

\bibitem{niederreiter1992random}
H.~Niederreiter, \emph{Random number generation and quasi-Monte Carlo methods}.\hskip 1em plus 0.5em minus 0.4em\relax SIAM, 1992.

\end{thebibliography}

\end{document}